%
%
%
%
%
%
%
\documentclass[%
 reprint,
 amsmath,amssymb,
 aps,
]{revtex4-1}

\usepackage{graphicx}
\usepackage{dcolumn}
\usepackage{bm}
\usepackage{hyperref}

\usepackage[caption=false]{subfig}
\usepackage{siunitx}
\usepackage{amsthm}


\newtheorem{theorem}{Theorem}
\newtheorem{definition}{Definition}
\newtheorem{corollary}{Corollary}[theorem]

\begin{document}

\preprint{APS/123-QED}

\title{Evaluating Performance of Photon-Number-Resolving Detectors}

\author{Mattias J\"{o}nsson}
\email{matjon4@kth.se}
\author{Gunnar Bj\"{o}rk}%
\email{gbjork@kth.se}
\affiliation{%
 Department of Applied Physics, KTH Royal Institute of Technology\\
 AlbaNova Univerity Center, SE 106 91 Stockholm, Sweden
}%




\date{\today}

\begin{abstract}
We analyze the performance of photon-number-resolving (PNR) detectors and introduce a figure of merit for the accuracy of such detectors. This figure of merit is the (worst-case) probability that the photon-number-resolving detector  correctly predicts the input photon number. Simulations of various PNR detectors based on multiplexed single-photon ``click detectors'' is performed. We conclude that the required quantum efficiency is very high in order to achieve even moderate (up to a handful) photon resolution, we derive the required quantum efficiency as a function of the the maximal photon number one wants to resolve, and we show that the number of click detectors required grows quadratically with the maximal number of photons resolvable.
\end{abstract}

\maketitle


\section{\label{sec:introduction}INTRODUCTION}
Photon-number-resolving (PNR) detectors have been shown to be useful in various optical applications, such as linear optical quantum computing \cite{Knill2001}, quantum key exchange \cite{PhysRevA.98.012333}, entanglement distribution \cite{Pan2017}, photon-counting laser radars \cite{Huang2014}, X-ray astronomy \cite{Holland1999}, evaluation of single photon sources \cite{Hadfield:05}, and elementary particle detection \cite{Haba2008}. Due to the wide applicability of these detectors many different schemes have been proposed; both inherent PNR detectors \cite{Kim1999, Irwin1995, PhysRevA.71.061803, Waks2003, Rosenberg2005a, Rosenberg2005b, Lita2008, Lita2009, Fukuda2011, Cahall2017} and various schemes for multiplexing single-photon detectors as to construct what is called a segmented PNR detector \cite{Banaszek2003, Fitch2003, Achilles2003, Achilles2004, Eraerds2007, Jiang2007, Divochiy2008, Guerrieri2010, Afek2009, Natarajan2013, Mattioli2015, Nehra2017}. However, so far only limited work has been dedicated to systematically and realistically evaluate the actual performance of segmented PNR detectors \cite{Achilles2003,Fitch2003,Natarajan2013, Nehra2017}.

Figures of merit used to evaluate the performance in previous work has generally been input signal dependent, which makes it difficult to determine how the detectors will perform in applications with varying input signals. Furthermore, the figures of merit used for single photon detectors, such as quantum efficiency, can not be directly applied to PNR detectors since they can have elaborate internal detection mechanisms and therefore behave differently for different input signals. 

In this paper we introduce a figure of merit, the PNR quality, to evaluate the accuracy of PNR detectors. We show that this figure is a natural generalization of the quantum efficiency. The figure is input signal independent for a selected set of input signals and any uncertainty within this set. However, generalizing the set to any input, the figure becomes completely input independent. We also show how the PNR quality can be used to set an upper, resolvable input photon-number for PNR detectors such that the number prediction for any input not exceeding this number can be trusted to within a specified ``confidence'' level.

We simulate three different segmented PNR detector schemes, a spatial array, a temporal array constructed using fiber couplers and a loop-multiplexed detector, and evaluate their PNR performance. Our analysis is much in the spirit of \cite{Nehra2017,Kruse2017} and qualitatively our results agree but the assumptions and details differ. We show quantitatively how the quantum efficiency of the single-photon detector(s) limit the number of photons resolvable with each PNR detector type, and that the needed number of detector elements grows quadratically with the number of photons one desires to resolve. We also analyze quantitatively how the dark-count probability affects the segmented PNR-detector quality.

\section{\label{sec:detector-quality}PNR QUALITY}
An ideal PNR detector gives an output signal indicating the input number of photons independent of the input signal used. However, in practice all detectors have non-ideal characteristics, so there exists some upper input number $n$ which is the largest number of photons that the detector is capable of resolving with reasonable certainty. To incorporate such non-ideal properties in a model, we introduce the $n$-detector which is a PNR detector capable of detecting up to $n$ photons, with probabilities $P_{k, m}$ that the output is $S_O = k$, given that the input was $S_I = m$ photons. The probability to receive an output larger than $n$ is zero and thereby the $n$-detector is effectively classifying the input signal into the $n + 1$ output classes $0, 1, \dots, n-1$, and $ \geq n$ photons. It is possible to transform any PNR detector into a $n$-detector by mapping the output $S_O \mapsto \min \{ S_O, n \}$.

Quite naturally we say that the output from a PNR detector was desired if the input was classified into the correct class, which for an $n$-detector corresponds to
\begin{equation}
    S_{O, \text{ desired}} = 
    \begin{cases}
        m & \text{if } S_I = m \leq n\\
        n & \text{otherwise}
    \end{cases}.
\end{equation}
Given the desired output, we can define a figure of merit to be the probability that the $n$-detector's output is desired, although this figure will in general be dependent on the input signal and we therefore need to select a specific, completely general input signal set to remove the dependence.

To eventually arrive at an input-signal independent figure of merit we first introduce the PNR quality, which is the smallest probability that the $n$-detector's output is the desired one on a set of possible input photon-number distributions $\mathcal{F}$. Formally we define the PNR quality as
\begin{equation}
\begin{split}
    Q_n(\mathcal{F}) \equiv \inf_{p \in \mathcal{F}} \Bigg( \sum_{m \leq n} P_{m, m} p(m) + \sum_{m > n} P_{n, m} p(m) \Bigg),
\end{split}
\end{equation}
where the set $\mathcal{F}$ is a subset of all probability distributions on $\mathbb{N}$. Consequently, it holds that if the input signal distribution is in $\mathcal{F}$ then the probability for the output to be the desired one is at least $Q_n(\mathcal{F})$. The set $\mathcal{F}$ should therefore be selected so it includes all input signals that the detector is expected to handle, for example any Poissonian photon-number distribution. 

The choice of input photon-number distribution set that the detector is expected to handle and the maximal number of photons that can be resolved by the detector both affect the PNR quality of the detector. For the former it holds trivially that a reduction of the distribution set from $\mathcal{F}$ to $\mathcal{F}' \subset \mathcal{F}$ can help to improve the PNR quality of the detector. Hence, quite naturally, there is a trade-off between allowing a large input set and having a high guaranteed probability to get the correct output. For the latter it holds generally that (see appendix \ref{sec:proofs})
\begin{equation}
    Q_n(\mathcal{F}) \leq Q_{n-1}(\mathcal{F}) \quad \forall \, \mathcal{F}, n > 0,
\end{equation}
so by reducing the maximal number of photons that can be resolved we may improve the PNR quality of the detector.

Finding the desired output probability infinum on a function set can be difficult if the number of free parameters is large, or if there is uncertainty which signal is used as input. For example, assume that there are two possible input signals $p_1$ and $p_2$, but that there is an uncertainty which of the two which will be sent. The set of possible input distributions is then not $\mathcal{F} = \{ p_1, p_2 \}$, but rather
\begin{equation}
    \mathcal{F}' = \{ a p_1 + b p_2 \mid a + b = 1 \land a, b \geq 0 \}.
\end{equation}
However, the uncertainty does not increase the complexity of the optimization problem since it holds that
\begin{equation}
    Q_n(\mathcal{F}') = Q_n(\mathcal{F}),
\end{equation}
so it is therefore possible to simplify the minimization problem to only consider $\mathcal{F}$. Generally it holds that (see appendix \ref{sec:proofs})
\begin{equation}
    Q_n(\mathcal{A}) = Q_n(\mathcal{B})
\end{equation}
if it is possible to write all elements in $\mathcal{A}$ as a linear combination of elements in $\mathcal{B}$ where the coefficients are non-negative.

Let us consider a special case of the PNR quality when $\mathcal{F}$ is the set of all probability distributions on $\mathbb{N}$. We denote this special case with $Q_n$ and we can show that (see appendix \ref{sec:proofs})
\begin{equation}
    Q_n = \min \bigg\{ \min_{m \leq n} P_{m, m}, \; \inf_{m > n} P_{n, m} \bigg\}.
    \label{eq:quality-full-set}
\end{equation}
Hence, the most difficult input signals to resolve are the distributions with $\SI{100}{\percent}$ probability for some photon number, i.e., any Fock state. Furthermore, for a single-photon detector with quantum efficiency $\eta$ and dark count rate $r_d = 0$ the quality of the detector is $Q_1 = \eta$, which makes the signal $p(k) = \delta_{k, 1}$ the most difficult signal to resolve. The quality for the full probability set could be thought of as a generalization of the quantum efficiency for single-photon detectors.

A reasonable requirement on a PNR detector is that it should outperform guessing the outcome. If a detector has $Q_n \geq 0.5$ then it has better-than-guessing quality for any signal with unknown probabilities consisting of 2 or more outcomes. Therefore we will in the following often use this value to define for what photon number $n$ specific PNR detectors can reasonably be said to resolve $0, 1, \ldots, n-1$ or $n$ or more, input photons. 

\section{\label{sec:detector-simulation}DETECTOR SIMULATION}
\begin{figure}[t]
    \centering
    \includegraphics[width=0.45\linewidth]{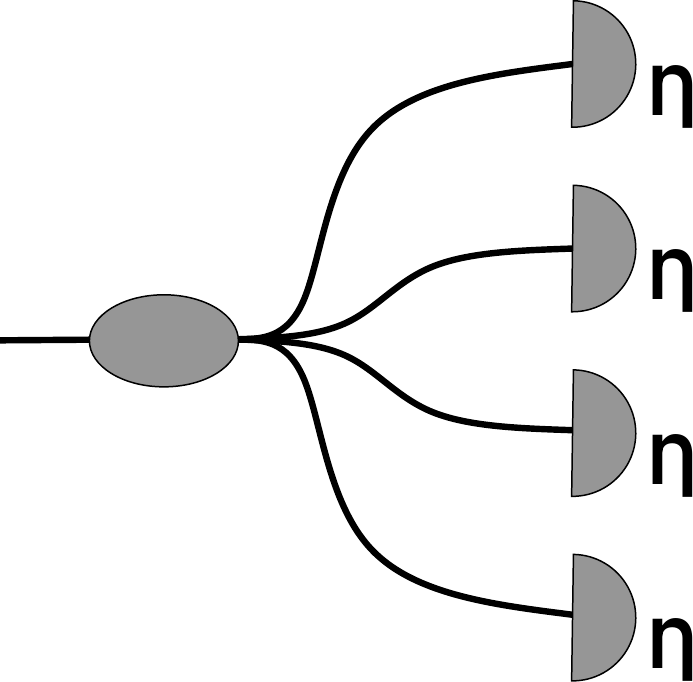}
    \caption{Schematic image of a spatial array consisting of 4 detector elements with quantum efficiency $\eta$. An input signal equally distributed over the four detector elements, e.g., with a 1-to-4 fiber coupler.}
    \label{fig:space-schematic}
\end{figure}

\begin{figure*}[t]
    \centering
    \subfloat[\label{fig:quality-space-8}]{%
       \includegraphics[width=.494\linewidth]{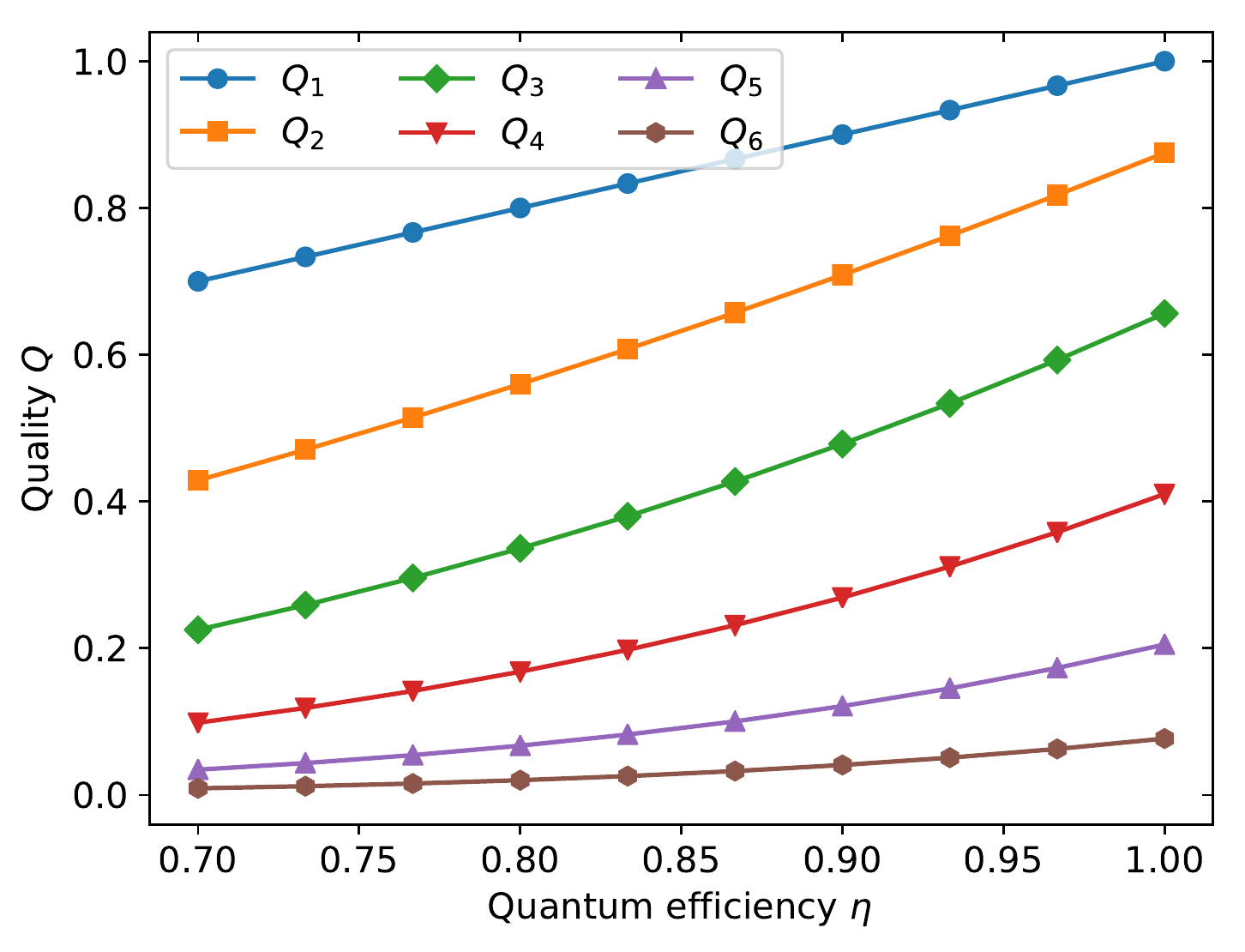}
    }
    \subfloat[\label{fig:poisson-quality-space-8}]{%
        \includegraphics[width=.494\linewidth]{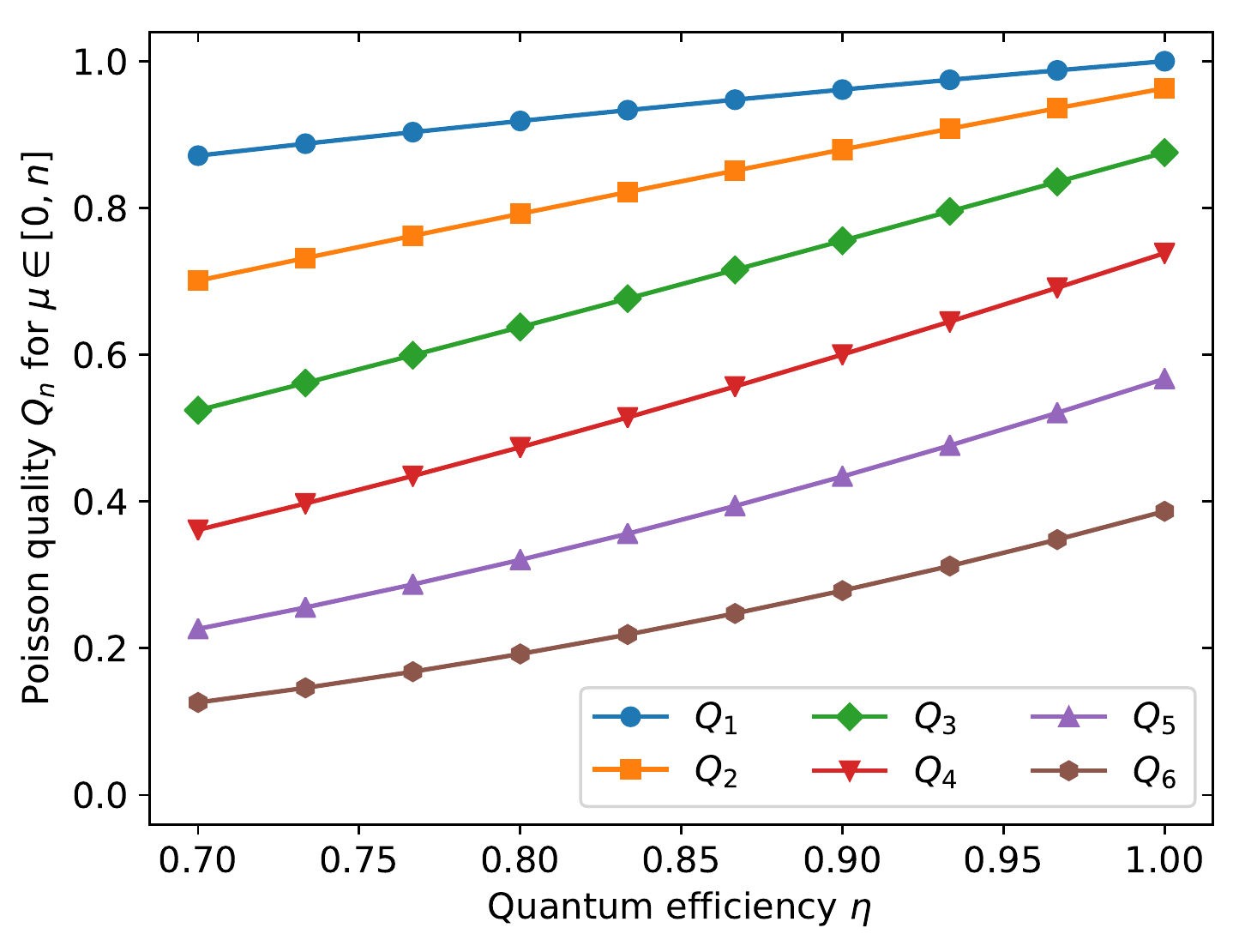}
    }
    
    \subfloat[\label{fig:quality-space-32}]{%
       \includegraphics[width=.494\linewidth]{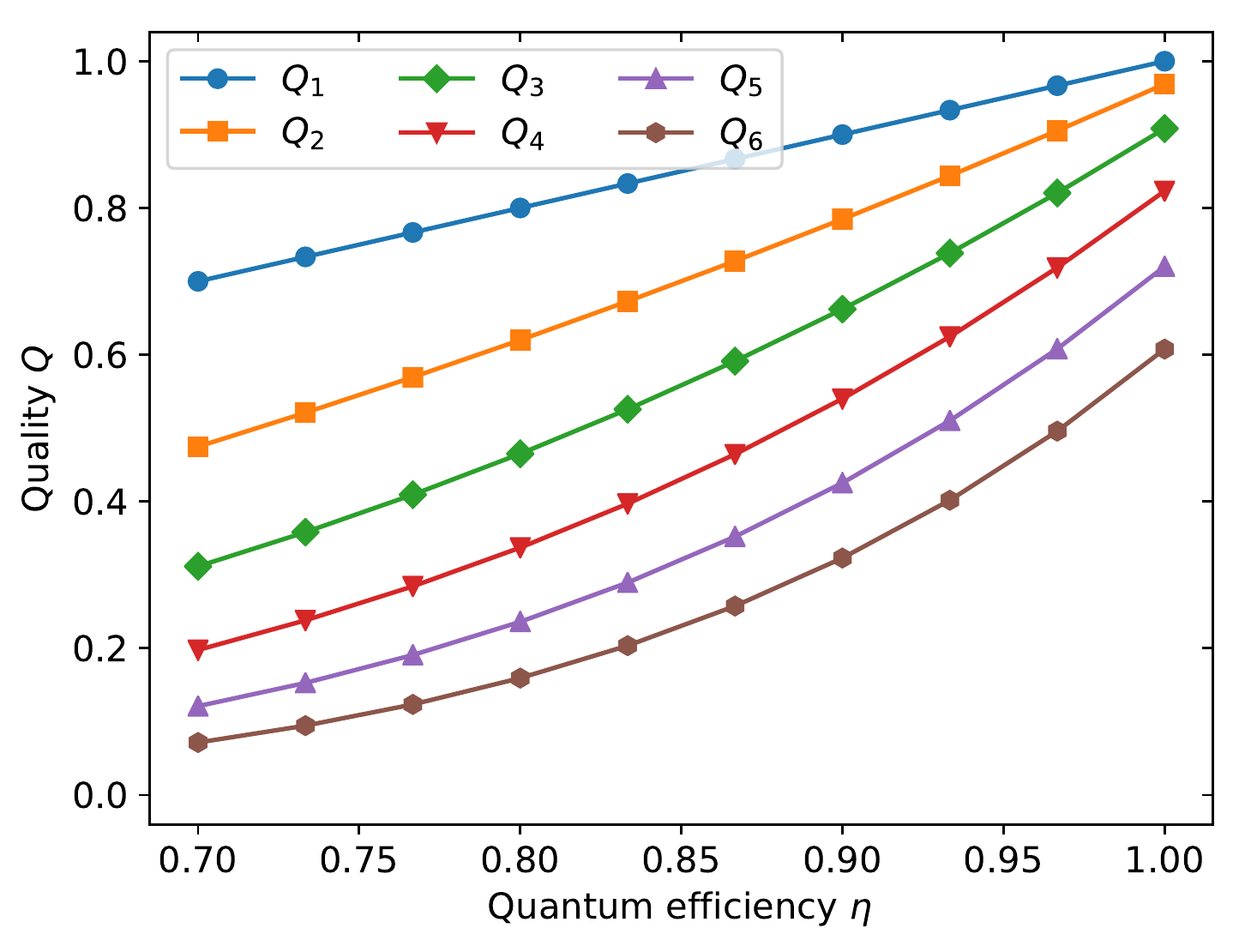}
    }
    \subfloat[\label{fig:poisson_quality_space_32}]{%
        \includegraphics[width=.494\linewidth]{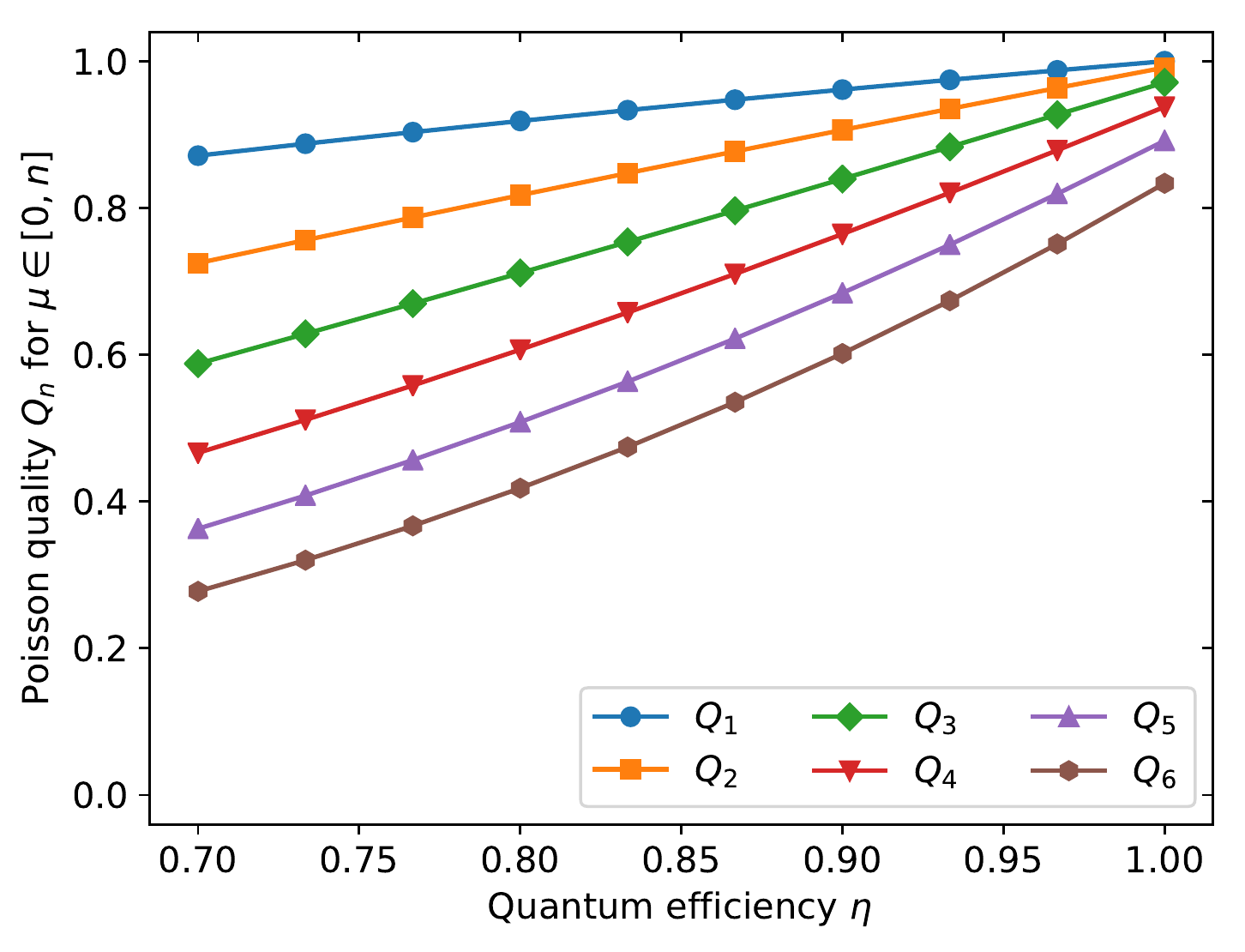}
    }
    \caption{\label{fig:space-quality}Simulation result of the PNR quality for spatial arrays with negligible dark count rates. In (a) and (c), the PNR quality using the set of all probability distributions is presented for an array consisting of 8 and 32 elements, respectively. In (b) and (d), the PNR quality using the set of Poisson distributions with mean $\mu \leq n$ is presented for an array consisting of 8 and 32 elements, respectively. Both array sizes show improvement in the quality when restricted to the set of Poisson distributions. However, reaching a quality larger than $0.5$ for many photons still requires a very high quantum efficiency.}
\end{figure*}

In this section we present the PNR quality obtained from simulations of three different ``segmented'' detectors, a spatial array, exemplified in \cite{Eraerds2007,Jiang2007,Divochiy2008,Guerrieri2010,Mattioli2015}, a temporal array exemplified in \cite{Achilles2003,Fitch2003,Natarajan2013,Kruse2017}, and a loop-multiplexed detector such as in \cite{Banaszek2003,Haderka2004}. We limit ourselves to two different function sets, the set of all probability distributions on $\mathbb{N}$ and the set of all Poisson distributions
\begin{equation}
    \mathcal{F} = \bigg\{ f: k \in \mathbb{N} \mapsto \frac{\mu^k e^{-\mu}}{k!} \mid \mu \in [0, n] \bigg\},
\end{equation}
where we have limited the mean $\mu$ to be at most $n$ for a $n$-detector. We run this simulation for different values of $n$ to determine what quantum efficiency is required to reliably ($Q_n \geq 0.5$) detect $n$ photons.

In the simulations we model the three multiplexed detectors as devices that distribute photons on click elements. Thus the individual detector are assumed to only distinguish between zero and one or more photons, while the segmented detector is able to use the combined result from the individual detector elements to detect more than one photon. To get the probabilities $P_{k, m}$ for different outcomes of the multiplexed detector we sum over all possible distributions on the detector elements and all possible detection outcomes when the photons hit the single-photon detector elements. The exact distribution of photons over the segments is implementation dependent and is presented in the subsections below, however the detector element model is shared among the considered PNR detectors. We assume that if a click-detector has quantum efficiency $\eta$, dark count probability $p_d$ and $m$ photons hit the detector then the probability for the detector to click is
\begin{equation}
    \Pr(\text{click} \mid m) = 1 - (1 - p_d) (1 - \eta)^m.
\end{equation}
That is, we have assumed that the dark count probability is independent of the photon detection probability.

\subsection{\label{sec:spatial-arra-simulation}Spatial array}
The spatial array consists of $M$ click detector elements and the photons are distributed equiprobably over the elements as seen in Fig. \ref{fig:space-schematic}. A prototype implementation is a uniformly illuminated array of single-photon avalanche photodiodes \cite{Jiang2007, Guerrieri2010} or an array of superconducting nano-wire detectors \cite{Divochiy2008, Mattioli2015}. The outcomes when $m$ photons is used as input can be represented as a matrix $\vec{x} \in \mathbb{N}^M$, where element $k$ is the number of photons that hit detector $k$ and $||\vec{x}||_1 = m$. The corresponding probability for each outcome is given by the multinomial distribution
\begin{equation}
    \Pr (\{ x_1, x_2, ... \}) = \frac{M!}{M^m \prod_{i = 1}^M x_i!}.
\end{equation}

\begin{figure}[t]
    \centering
    \includegraphics[width=1\linewidth]{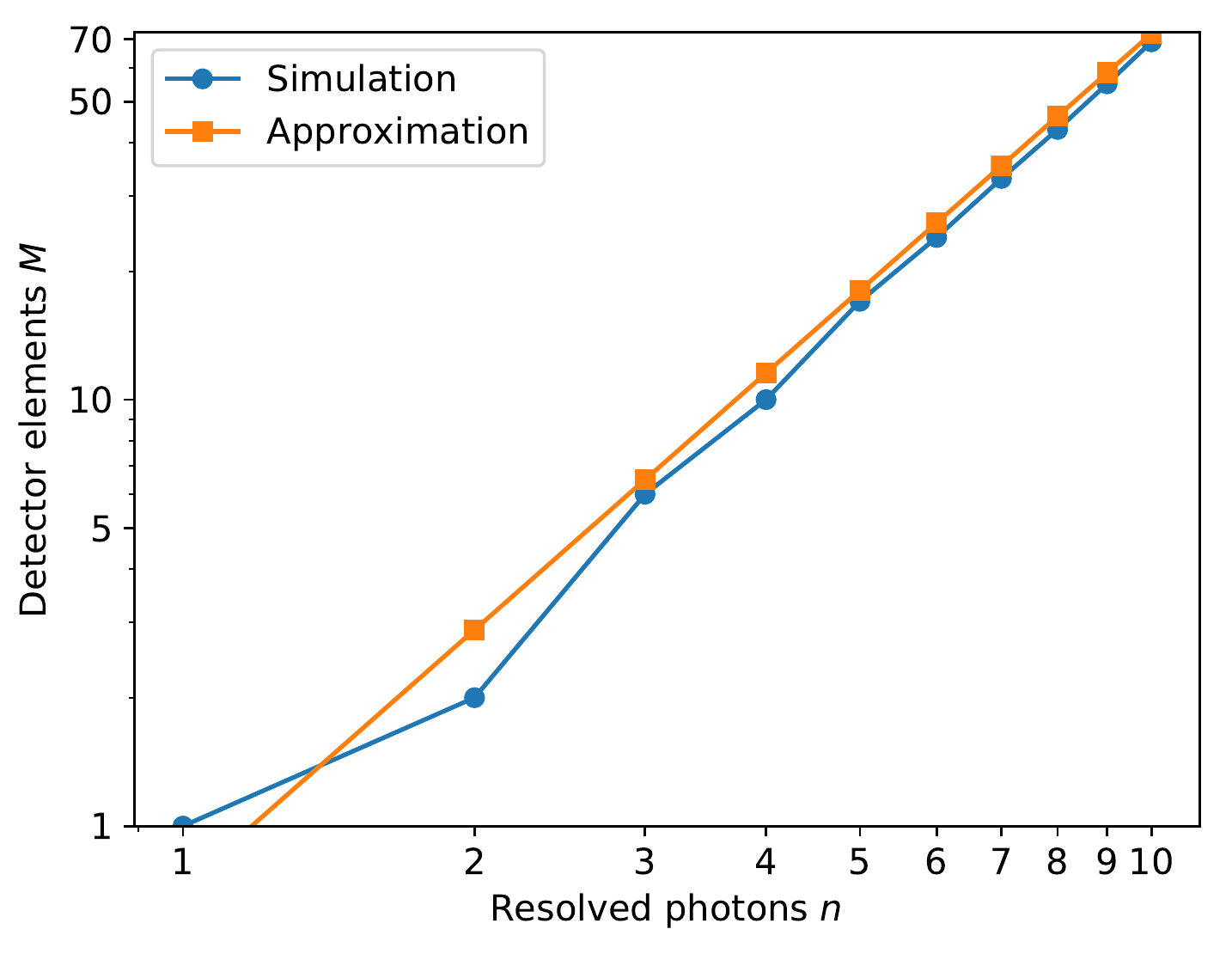}
    \caption{The number of detector elements $M$ needed get a PNR quality $Q_n \geq 0.5$ in a spatial array with quantum efficiency $\eta = 1$ and negligible dark counts as a function of the number of resolved photons. The simulated result is in agreement with the approximation in Eq. (\ref{eq:space-detector-scaling}) for sufficiently large $M$.}
    \label{fig:quality-scaling}
\end{figure}

The probabilities $P_{k, m}$ for the spatial array are in general difficult to compute analytically, however it is possible to show that 
\begin{equation}
    P_{m, m} = \frac{M!}{M^m (M - m)!} \eta^m.
    \label{eq:space-array-all-detected}
\end{equation}
This gives us an upper bound for the PNR quality $Q_n$ and it holds that $Q_n \leq \eta^n$. Consequently, for a given PNR quality the maximal number of photons that can be resolved is bounded by $n \leq \ln{Q_n} / \ln{\eta}$, independent of the number detector elements in the array. Furthermore, if we assume that $M \gg n$ we can show, using Eq. (\ref{eq:space-array-all-detected}) and Stirling's formula, that
\begin{equation}
    M \geq \frac{1}{2} \frac{n^2}{n \ln{\eta} - \ln{Q_n}} + \mathcal{O} \bigg( \frac{n}{M} + M \ln{M}  \bigg),
    \label{eq:space-detector-scaling}
\end{equation}
if $n < \ln{Q_n} / \ln{\eta}$. In Fig. \ref{fig:quality-scaling} it is shown that the number of detector elements $M$ needed is well approximated by Eq. (\ref{eq:space-detector-scaling}) for sufficiently large $M$. This implies that for a fixed PNR quality the number of detector elements grows quadratically with the number of photons that the PNR detector can resolve according to our quality criterion. It was noted already in \cite{Sperling2012} that the number of detector elements must be much larger the number of photons one wants to resolve. In Fig. 2 of \cite{Sperling2012} one sees that for 9 photons the minimum $M$ is somewhat less than 100 which is in agreement with our Fig. \ref{fig:quality-scaling}. To resolve 5 photons $\geq 20$ elements are needed, and to resolve 10 photons one needs $\geq 72$ detector elements. Hence, to resolve many photons, many detector elements, all of them having a high quantum efficiency, are required. As we shall see, it is very challenging to build PNR detectors using the spatial array scheme.

\begin{figure}[t]
    \centering
    \includegraphics[width=1\linewidth]{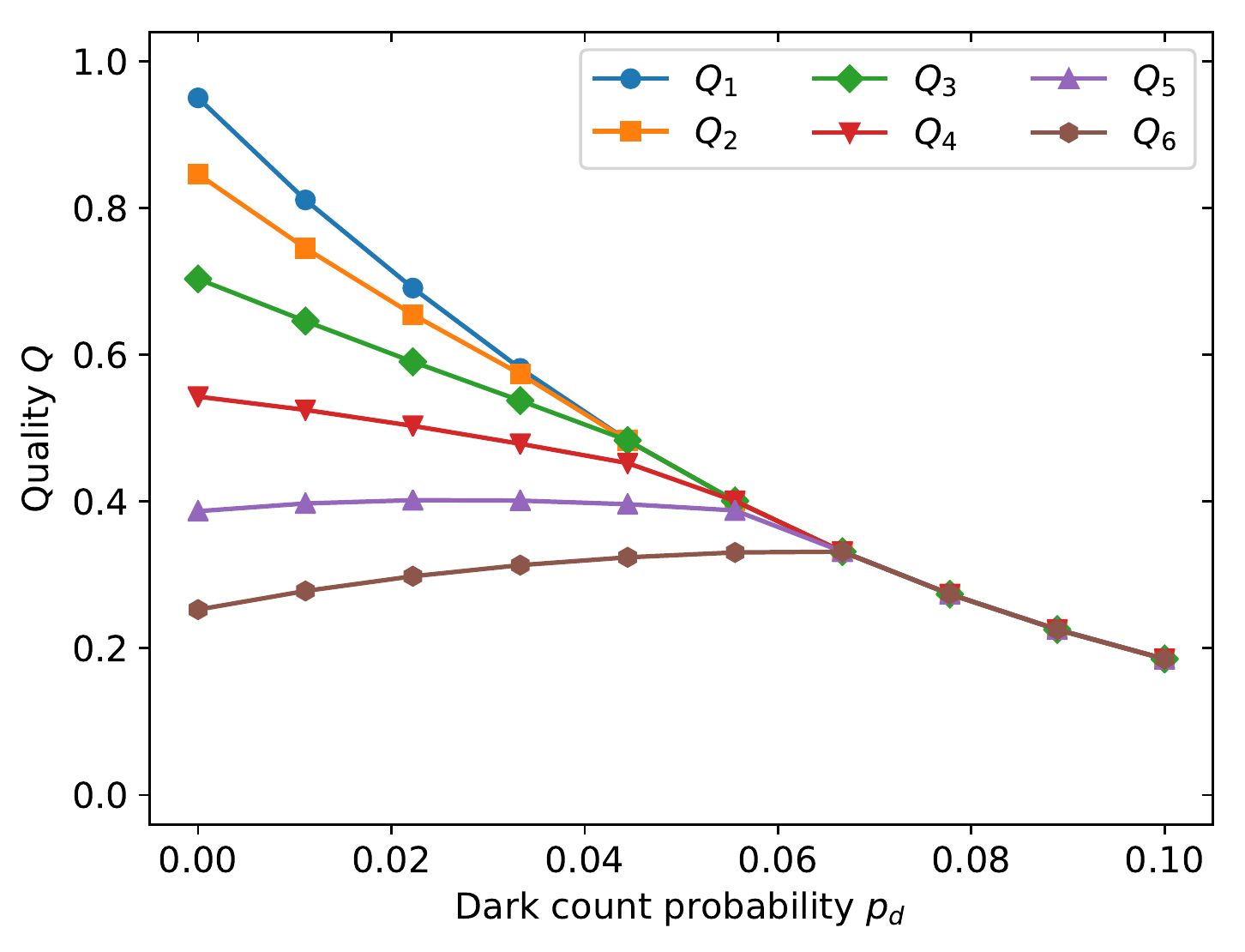}
    \caption{PNR quality for a 16 element array with quantum efficiency $\eta = 0.95$ as a function of the dark count probability of each detector element. The general trend shows that the PNR quality decreases with increased dark counts, however in some instances it is possible to have an increasing PNR quality.}
    \label{fig:space-quality-with-dark-counts}
\end{figure}

In Fig. \ref{fig:space-quality} the simulated PNR quality is presented for spatial arrays consisting of 8 and 32 detector elements where dark counts have been neglected. As expected there is a significant improvement for both arrays when the distribution set is restricted to Poisson distributions, although the requirement on the quantum efficiency is still very high in order to resolve many photons. Comparing an eight-segment detector in Fig. \ref{fig:space-quality} \subref{fig:quality-space-8} with a 32-segment detector in \ref{fig:space-quality} \subref{fig:quality-space-32} we notice that the input photon number for which $Q \geq 0.8$ doubles for $\eta = 1$ which is what was predicted by Eq. (\ref{eq:space-detector-scaling}).

 Using the $Q_n \geq 0.5$ requirement to evaluate the spatial arrays yields the result that an ideal 8 detector element array (as seen in Fig. \ref{fig:space-quality} \subref{fig:quality-space-8}) is only capable of resolving 3 photons if all possible input signals are allowed. If restricted to Poisson distributions (as seen in figure \ref{fig:space-quality} \subref{fig:poisson-quality-space-8}) such a detector is able to detect up to 5 photons, although in both cases the requirement on the quantum efficiency is close to unity ($\eta \geq 0.92$ and $\eta \geq 0.96$, respectively).

In Fig. \ref{fig:space-quality-with-dark-counts} it is shown how the PNR quality is affected by the dark count probability for a fixed quantum efficiency. As expected, the general trend is that the PNR quality quickly decreases as the dark count probability increases, yet $Q_6$ violates this trend in the region $p_d \in [0, 0.06]$ where it is increasing. Hence, the probability to resolve the signal that minimized the desired output must have been increased by the added dark counts. Loosely speaking, the dark counts compensate for the non-detected incident photons for high photon numbers. However, at the same time the probability to resolve other signals decrease ($Q_1$ to $Q_4$ all decrease with $p_d$) so quite intuitively it is therefore not beneficial for the overall PNR detector performance to add dark counts for the purpose of increasing the quality for large $n$.

\subsection{\label{sec:temporal-array-simulation}Temporal array}
The temporal array consists of two single-photon detectors and a series of fiber couplers that split the signal equally between different fiber paths. Such setups, each with three couplers, have been reported in \cite{Achilles2003,Fitch2003, Natarajan2013}. The lengths of the paths should be chosen both so that the detectors have time to recover between photons taking different paths and also so that a photons taking different paths never combine and interfere in a subsequent coupler. One scheme that can fulfill these two criteria is presented in Fig. \ref{fig:temporal-schematic}, where $l$ and $L$ are introduced. The length $l$ is chosen to be some arbitrary length, while $L$ is chosen to be the shortest distance between light pulses needed for the click detectors to recover. The length of the top and the bottom fiber after coupler $k$ is selected to be $2^{k-1} L + l$ and $l$, respectively \cite{Fitch2003}. In \cite{Kruse2017} it was shown that for the number of detector elements we are are considering ($< 100$), fiber dispersion effects can be neglected.

\begin{figure}[t]
    \centering
    \includegraphics[width=1\linewidth]{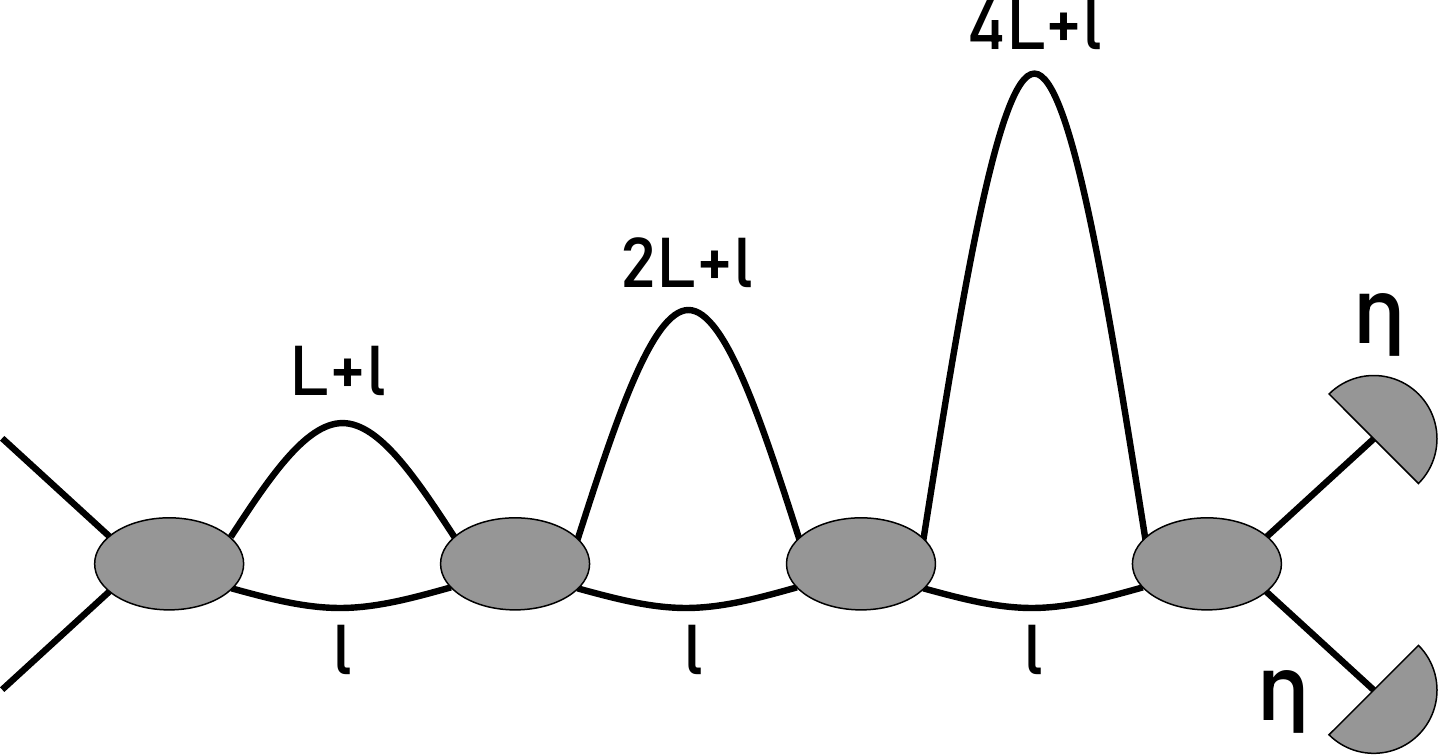}
    \caption{Schematic image of a temporally segmented detector. The fibers between the 50:50 couplers are chosen so the detectors have time to recover between pulses and so a split signal can not recombine. Between coupler $k$ and $k+1$ the lower fiber has a short (but arbitrary) length $l$ and the upper fiber has length $2^{k-1} L + l$, where $L$ corresponds the smallest length rendering two subsequent pulses that the detector can resolve.}
    \label{fig:temporal-schematic}
\end{figure}

The setup in Fig. \ref{fig:temporal-schematic} produces eight equidistantly spaced pulses with equal amplitude at each detector from a single input pulse. Thus the scheme mimics, by temporal splitting, a 16-detector element spatial array. This scheme thus results in a detector with an accuracy that is mathematically equivalent to the accuracy of a uniformly illuminated spatial array and the PNR quality is therefore described by Fig. \ref{fig:space-quality}. However, by introducing the couplers in front of the detector we increase the effective recovery time of the temporally segmented detector by a factor $M/2$, where $M$ is the number of effective detector segments. Moreover, the quantum efficiency of the segmented detector is lowered by the linear losses in the fiber couplers. If the fiber couplers have an efficiency $\eta_c$ and the number of effective detector segments is $M$ then
\begin{equation}
    \eta_{\text{eff}} = \eta_c^{\log_2 M} \eta.
\end{equation}
Hence, in this configuration there exists a trade-off between having a large number of detector elements and having a high quantum efficiency. At some point, by increasing the number of couplers one will therefore decrease the PNR quality due to a drop in quantum efficiency.

\subsection{\label{sec:loop-multiplexed-simulation}Loop-multiplexed detector}

\begin{figure}[t]
    \centering
    \includegraphics[width=0.5\linewidth]{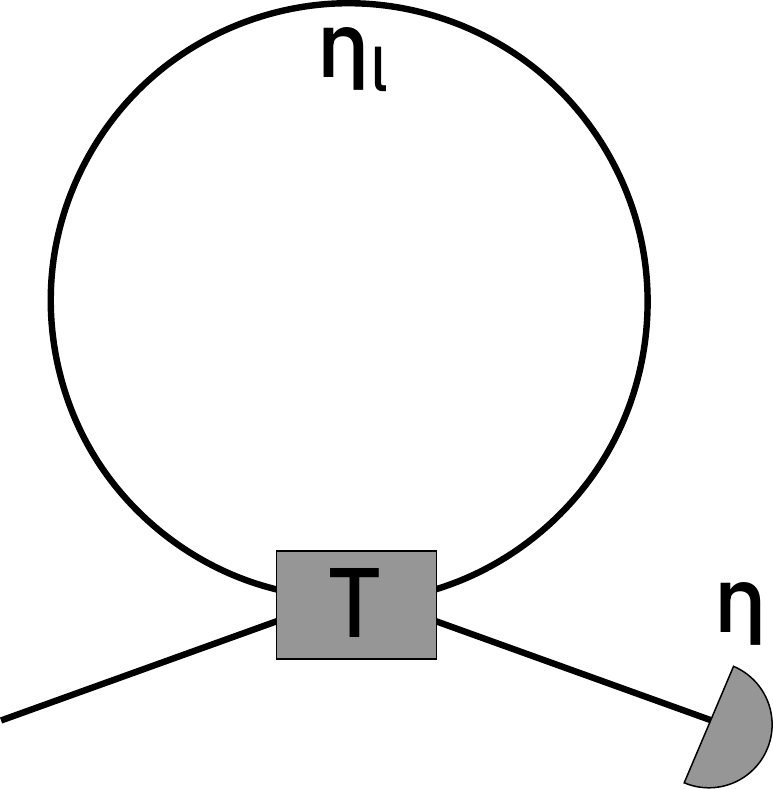}
    \caption{Schematic image of a loop-multiplexed detector. Incoming light enters the multi-mode fiber end to the left and enters a splitting circuit in the middle. The probability for photons to exit to the detector is $T$ independent of which of the two input fibers was used by the photon. The remaining probability is that the photon enters the loop, where it has a $\eta_l$ probability to survive per loop.}
    \label{fig:loop-schematic}
\end{figure}

\begin{figure*}[ht]
    \centering
    \subfloat[\label{fig:loop-quality-32}]{%
       \includegraphics[width=.494\linewidth]{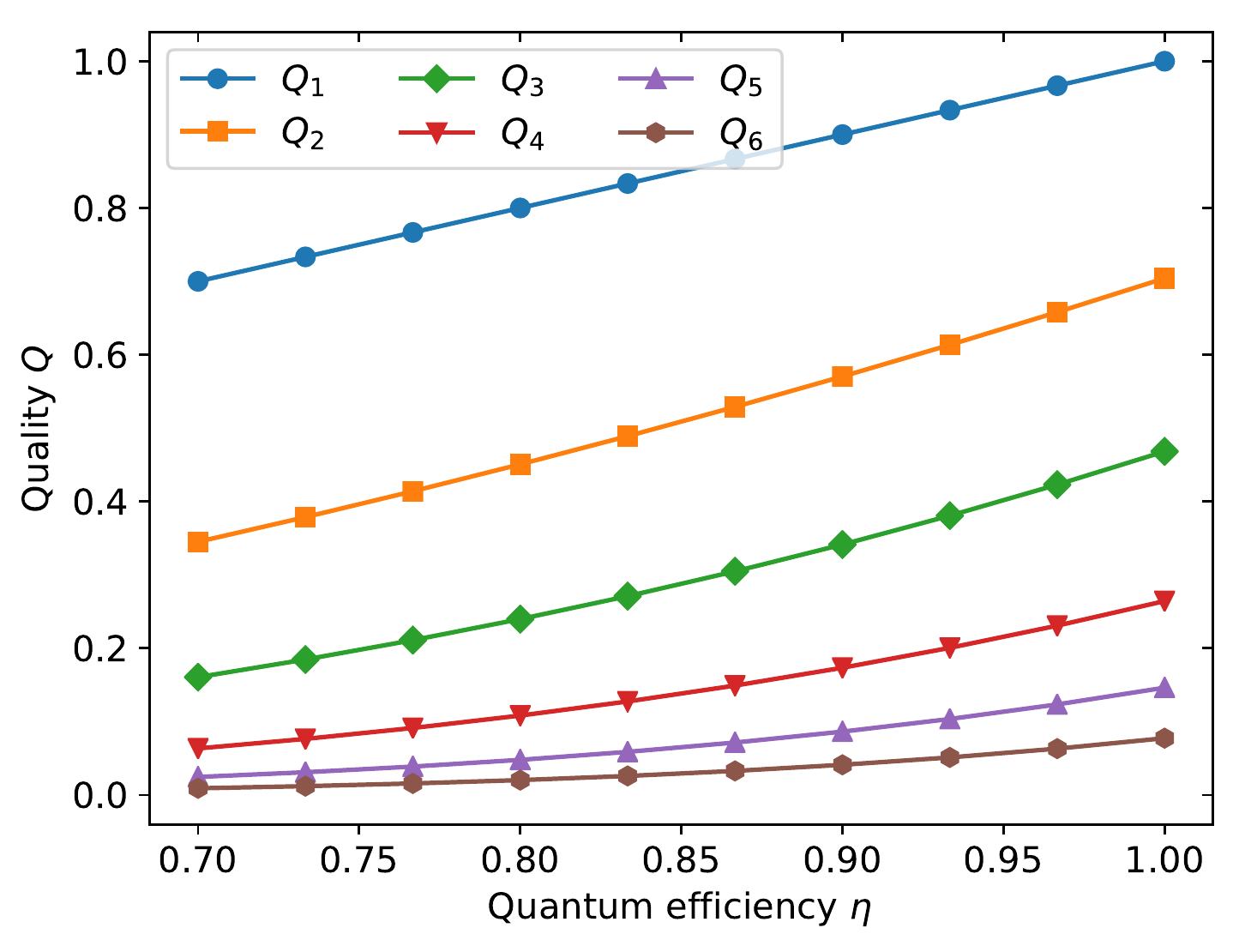}
    }
    \subfloat[\label{fig:loop-poisson-quality-32}]{%
        \includegraphics[width=.494\linewidth]{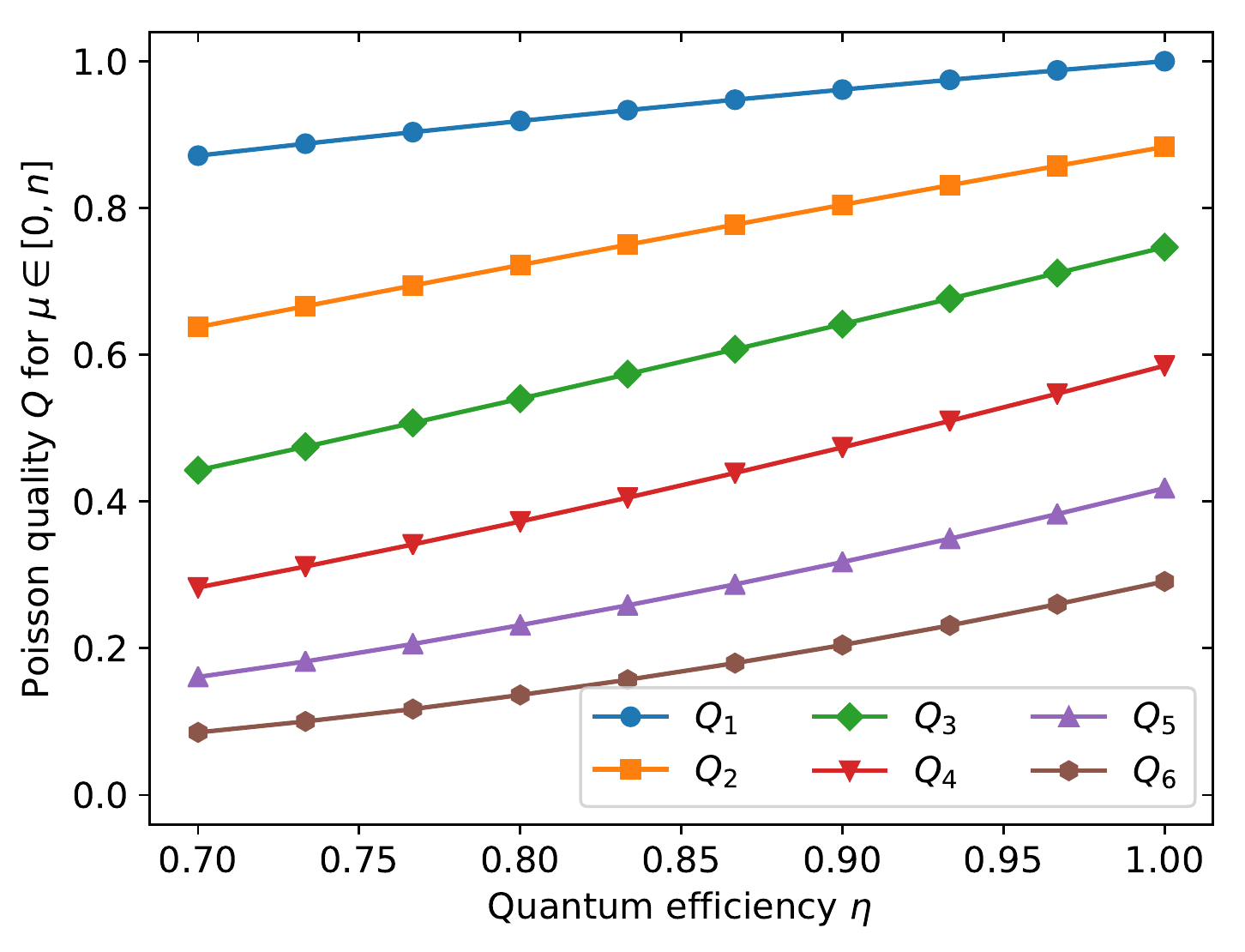}
    }
    \caption{\label{fig:loop-quality}PNR quality of a loop-multiplexed detector without dark counts. The probability that a photon survives the loop is $\eta_l = 0.97$ and at most $l = 32$ loops are allowed. (a) The PNR quality for all distributions is presented. Only two photons can be detected with a quality higher than $0.5$. (b) The PNR quality for all Poisson distributions with mean $\mu \leq n$. Compared to (a) the quality has improved significantly, yet it is still only possible to detect up to four photons even with a perfect detector.}
\end{figure*}
The loop-multiplexed detector consists of a multi-mode fiber loop with inherent mode scrambling, a multi-mode to single mode coupler circuit and a single photon detector as seen in Fig. \ref{fig:loop-schematic}. Properly designed, when a input enter the circuit each photon has a probability $T$ to exit the loop and hit the detector. The remaining photons loop back into the circuit and the process repeats from the beginning up to $l$ times. Hence, during each revolution in the loop a photon has probability $T$ to exit. It is not possible to build such a splitting circuit with a single mode fiber coupler since it would result in an exit probability $T$ at the first coupler passage and a probability $(1 - T)^2$ at the second passage. Consequently, using a single mode fiber coupler results in every photon having at least a $\SI{75}{\percent}$ probability to exit the loop during the first two passes of the coupler \cite{Haderka2004,Banaszek2003}. This makes it difficult to resolve multiple ($>2$) photons with any confidence using a click-detector in such a setup.

The probabilities for different outcomes in the loop-multiplexed detector can be described with a recursion relation
\begin{equation}
\begin{split}
    P^{(l)}_{k, m} &= \sum_{a, b} f_{\text{bin}}(t, a , m) f_{\text{bin}}(\eta_l, b , m - a) \times\\
    &\times \Big(\Pr(\text{click} \mid a) P^{(l - 1)}_{k - 1, b}\\
    &+ \Pr(\text{not click} \mid a) P^{(l - 1)}_{k, b} \Big),
\end{split}
\end{equation}
where $\eta_l$ is the probability that a photon survives the loop and can re-enter the circuit, $P^{(l)}_{k, m}$ is the probability to get an output $k$ given that the input is $m$ and at most $l$ loops are allowed after which the detector is turned off (or its output is ignored) and $f_{\text{bin}}(p, a, b)$ is the binomial probability to get $a$ successful outcomes given $b$ tries and a probability $p$.

Figure \ref{fig:loop-quality} shows the simulated result from the loop-multiplexed detector where we have assumed a maximal number of allowed loops $l = 32$. As seen in Fig. \ref{fig:loop-quality} \subref{fig:loop-quality-32}, for arbitrary inputs only two photons can be resolved with a PNR quality larger than $0.5$ and in Fig. \ref{fig:loop-quality} \subref{fig:loop-poisson-quality-32} only four photons can be resolved with a PNR quality of $0.5$ or more if the input photon distribution is limited to Poisson distributions. Consequently, this detector is outperformed by a temporal array with $32$ detector elements if the losses in the couplers are assumed to be equal to $\eta_l$.

\section{\label{sec:summar}SUMMARY}
In this paper we introduced a figure of merit that is useful to assess the resolution accuracy of PNR detectors. The figure can be made input-signal independent and is equal to the smallest probability that the detector gives the correct output.

Simulations of three different PNR detectors, implemented as multiplexed click-detectors show that the requirements on the quantum efficiency are very high in order to resolve a handful, or more, photons. With an eight segment detector, one cannot resolve more than three photons with better-than-guessing quality even with ideal click-detectors. Furthermore, the needed number of detector segments in an array grows quadratically with the number of photons resolved, so large arrays consisting of high quantum efficiency detector elements are imperative to resolve more than a few photons.

In the literature one encounters claims of PNR detectors that appears to contradict the limits we derive. Here it is important to differentiate between resolution of the input photon number and the resolution between the output signals corresponding to different numbers of absorbed photons. In \cite{Divochiy2008} array-PNR detectors are presented containing 4-6 segments. Even at unit quantum efficiency the input photon-number resolution ability of such detectors is limited to $n=3$, but at the reported quantum efficiency of 2 \%, not even single photon input resolution is reached. In contrast, in the (few) events where 1, 2, or 3 photons were absorbed by the array, the three different output signals are well resolved. Thus, the authors' claim about output signal resolution is correct. However, the output signals do not allow one to conclude much about the input photon number. In \cite{Mattioli2015} arrays with 4, 5, 12, and 24 elements were reported. Again, a 4 element array can only resolve, with accuracy, two photons, even at unit quantum efficiency. At the reported quantum efficiency of 0.17 \% (for the 12 element detector) none of the arrays have even single photon resolution although, again, the output signals corresponding to the absorption of $0, 1, \ldots 24$ photons are shown to be resolvable. Finally in \cite{Eraerds2007}, a 132 element detector is reported with a 16 \% efficiency. The authors make no claims about input photon number resolution but note that the signals corresponding to different number of absorbed photons are well resolved. According to our resolution criterion the detector is not even able to resolve single input photons from none.

With the technology available today only a few detector types, such as transition-edge detectors \cite{Lita2008,Fukuda2011} and superconducting nanowire detectors \cite{Marsili2013,EsmaeilZadeh2017} have quantum efficiencies high enough to give better-than-guessing quality for more than a few photons. Even so, most realizations of PNR detectors with spatial arrays of click-detectors have difficulty to increase the number of detector elements to what is required for better-than-guessing PNR capability. Therefore, temporal arrays seem to be the most reasonable option to implement segmented PNR detectors if the losses in the fiber couplers can be made small and if some temporal resolution can be sacrificed.

\section*{\label{sec:acknowledgments}Acknowledgments}
This work was supported by the Knut and Alice Wallenberg Foundation grant "Quantum Sensing", the Swedish Research Council (VR) through grant 621-2014-5410, and through its support of the Linn\ae us Excellance Center ADOPT.

The simulations were performed on resources provided by the Swedish National Infrastructure for Computing (SNIC) at PDC.


%

\appendix

\section{\label{sec:proofs}PROOFS}

\begin{theorem}\label{the:limiting-detectors}
Let $\mathcal{F}$ be a subset of all probability distributions on $\mathbb{N}$ and let $Q_n(\mathcal{F})$ be the PNR quality on the set for an $n$-detector. It then holds that
\begin{equation}
    Q_{n}(\mathcal{F}) \leq Q_{n - 1}(\mathcal{F}) \quad \forall \, \mathcal{F}, n > 0.
\end{equation}
\end{theorem}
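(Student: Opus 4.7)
The plan is to exhibit the $(n-1)$-detector as a coarse-graining of the $n$-detector and then compare the two integrands pointwise. Starting from an $n$-detector with transition probabilities $P_{k,m}$ (so $P_{k,m}=0$ for $k>n$), apply the map $S_O \mapsto \min\{S_O, n-1\}$ described earlier in the text. This yields an $(n-1)$-detector whose probabilities $P'_{k,m}$ agree with $P_{k,m}$ for $k<n-1$ and satisfy $P'_{n-1,m} = P_{n-1,m} + P_{n,m}$.

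Next I would write out the integrand appearing in the definition of $Q_{n-1}(\mathcal F)$ and subtract from it the integrand appearing in $Q_n(\mathcal F)$, separating the contributions by the ranges $m<n-1$, $m=n-1$, $m=n$, and $m>n$. The terms with $m<n-1$ cancel, and a short bookkeeping shows that the difference equals
\begin{equation}
P_{n,n-1}\,p(n-1) + P_{n-1,n}\,p(n) + \sum_{m>n} P_{n-1,m}\,p(m),
\end{equation}
which is manifestly nonnegative for every $p$ since probabilities and $p(m)$ are nonnegative.

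Thus the integrand defining $Q_{n-1}(\mathcal F)$ dominates the integrand defining $Q_n(\mathcal F)$ pointwise on $\mathcal F$. Taking infima preserves inequalities, so $Q_{n-1}(\mathcal F) \geq Q_n(\mathcal F)$, which is the claim.

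There is no real obstacle here; the proof is a straightforward accounting exercise. The only point that requires a little care is the interpretive one: one must recognize that in comparing $Q_n$ and $Q_{n-1}$ on the same underlying physical detector, the two quantities involve different truncation levels and hence different probability matrices, and that the $(n-1)$-truncation is obtained by pooling the output classes $n-1$ and $n$ of the $n$-detector. Once this is made explicit, the pointwise comparison is immediate and the result follows.
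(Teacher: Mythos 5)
Your proof is correct and follows essentially the same route as the paper: realize the $(n-1)$-detector as the coarse-graining $S_O \mapsto \min\{S_O, n-1\}$ of the $n$-detector, show the desired-output probability can only increase pointwise in $p$, and pass to the infimum. Your explicit difference term $P_{n,n-1}\,p(n-1) + P_{n-1,n}\,p(n) + \sum_{m>n} P_{n-1,m}\,p(m)$ is in fact slightly more careful bookkeeping than the paper's, which writes its final step as an equality while silently absorbing the $P_{n,n-1}\,p(n-1)$ contribution.
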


\begin{proof}
Assume that $P_{k, m}$ and $P'_{k, m}$ are the conditional probabilities for the $n$-detector and the $(n - 1)$-detector, respectively. A $n$-detector is made into a $(n-1)$-detector by mapping the output $S_O \mapsto \min \{S_O, n-1 \}$ and hence $P'_{n-1, m} = P_{n-1, m} + P_{n, m}$. Furthermore, the desired output of the $n$-detector is
\begin{equation}
\begin{split}
    &\sum_{m \leq n} P_{m, m} p(m) + \sum_{m > n} P_{n, m} p(m)\\
    &=P_{n, n} p(n) + \sum_{m \leq n-1} P_{m, m} p(m) + \sum_{m > n} P_{n, m} p(m)\\
    &\leq(P_{n-1, n} + P_{n, n}) p(n) + \sum_{m \leq n-1} P_{m, m} p(m)\\
    &+ \sum_{m > n} (P_{n-1, m} + P_{n, m}) p(m)\\
    &= \sum_{m \leq n-1} P'_{m, m} p(m) + \sum_{m > n-1} P'_{n, m} p(m),
\end{split}
\end{equation}
where the expression in the last equality is the desired output of a $(n - 1)$-detector. It follows directly that $Q_{n}(\mathcal{F}) \leq Q_{n - 1}(\mathcal{F})$ if $n > 0$.
\end{proof}

\begin{definition}\label{def:extended-set}
Let $\mathcal{F}$ be a a subset of all probability distributions on $\mathbb{N}$. The extended distribution set contains $\mathcal{F}$ and all countable linear combinations of the elements in $\mathcal{F}$, so 
\begin{equation}
    \bar{\mathcal{F}} = \mathcal{F} \cup \bigg\{ \sum_{k} a_k p_k \mid p_k \in \mathcal{F}, \sum_k a_k = 1, a_k > 0 \bigg\}.
\end{equation}
\end{definition}

\begin{theorem}\label{the:extended-quality}
For any set $\mathcal{F}$ it holds that
\begin{equation}
    Q_n(\mathcal{F}) = Q_n(\bar{\mathcal{F}}).
\end{equation}
\end{theorem}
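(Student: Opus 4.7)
The plan is to exploit the fact that the quantity
\[
D(p) \;\equiv\; \sum_{m \leq n} P_{m,m}\, p(m) + \sum_{m > n} P_{n,m}\, p(m)
\]
whose infimum defines $Q_n(\cdot)$ is linear in the input distribution $p$. So the result should follow from the general principle that taking the infimum of a linear functional over a set and over the convex hull of that set gives the same value. I will execute this as a two-way inequality.

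For the easy direction, note that $\mathcal{F}\subseteq\bar{\mathcal{F}}$ directly from Definition~\ref{def:extended-set}. Since the infimum of $D$ over a larger set can only be smaller or equal, I immediately get $Q_n(\bar{\mathcal{F}})\le Q_n(\mathcal{F})$.

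For the nontrivial direction, I take an arbitrary $\bar{p}\in\bar{\mathcal{F}}$. If $\bar{p}\in\mathcal{F}$ there is nothing to do, so assume $\bar{p}=\sum_k a_k p_k$ with $p_k\in\mathcal{F}$, $a_k>0$, and $\sum_k a_k=1$. Substituting this into $D(\bar{p})$ and interchanging the two (non-negative) sums gives $D(\bar{p})=\sum_k a_k D(p_k)$. Since $D(p_k)\ge Q_n(\mathcal{F})$ for each $k$, convexity of the coefficients yields $D(\bar{p})\ge Q_n(\mathcal{F})$. Taking the infimum over $\bar{p}\in\bar{\mathcal{F}}$ gives $Q_n(\bar{\mathcal{F}})\ge Q_n(\mathcal{F})$, and combining with the first inequality finishes the proof.

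The only place where any care is needed is the interchange of sums when expanding $D(\bar{p})$; this is the one ``calculation'' step. Because all $P_{k,m}$, all coefficients $a_k$, and all $p_k(m)$ are non-negative, Tonelli's theorem justifies the swap without any convergence subtleties, even for countable convex combinations. So I do not expect any real obstacle; the argument is essentially a one-line linearity remark dressed up with the definitions.
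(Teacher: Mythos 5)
Your proof is correct and follows essentially the same route as the paper's: both arguments rest on the linearity of the functional $D$ (the paper's $f$) under countable convex combinations, giving $D(\bar{p})=\sum_k a_k D(p_k)\ge Q_n(\mathcal{F})$ and hence the nontrivial inequality, with the reverse inequality coming from $\mathcal{F}\subseteq\bar{\mathcal{F}}$. Your explicit appeal to Tonelli's theorem to justify the interchange of sums is a small point of added rigor that the paper leaves implicit, but it does not change the structure of the argument.
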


\begin{proof}
Introduce the function 
\begin{equation}
    f: p \in \bar{\mathcal{F}} \mapsto \sum_{m \in \mathbb{N}} R_{m, m} p(m),
\end{equation}
where the coefficients
\begin{equation}
    R_{m, m} = 
    \begin{cases}
        P_{m, m} & \text{if } m \leq n\\
        P_{n, m} & \text{otherwise}
    \end{cases}.
\end{equation}
It then holds by definition that $Q_n(\mathcal{A}) = \inf_{p \in \mathcal{A}} f(p)$ for $\mathcal{A} \subseteq \bar{\mathcal{F}}$.

We can assume that $\bar{\mathcal{F}} \backslash \mathcal{F} \neq \varnothing$, otherwise the theorem holds trivially. Take $p \in \bar{\mathcal{F}} \backslash \mathcal{F}$ and compute
\begin{equation}
\begin{split}
    f(p) &= \sum_{k \in \mathbb{N}} a_k f(p_k)\\
    &\geq \inf_{l \in \mathbb{N}} f(p_l) \sum_{k \in \mathbb{N}} a_k\\
    &= \inf_{l \in \mathbb{N}} f(p_l).
\end{split}
\end{equation}
By definition $p_l \in \mathcal{F} \; \forall l \implies \inf_{l \in \mathbb{N}} f(p_l) \geq Q_n(\mathcal{F})$ so $Q_n(\mathcal{F})$ is a lower bound to $f$ on $\bar{\mathcal{F}} \backslash \mathcal{F}$. It therefore follows that $Q_n(\bar{\mathcal{F}} \backslash \mathcal{F}) \geq Q_n(\mathcal{F})$ and therefore it must be true that $Q_n(\bar{\mathcal{F}}) = Q_n(\mathcal{F})$.
\end{proof}

\begin{corollary}\label{the:equal-extensions}
Let $\mathcal{A}$ and $\mathcal{B}$ be subsets of all probability distributions on $\mathbb{N}$. If $\bar{\mathcal{A}} = \bar{\mathcal{B}}$ then it holds that
\begin{equation}
    Q_n(\mathcal{A}) = Q_n(\mathcal{B}).
\end{equation}
\end{corollary}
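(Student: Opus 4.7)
The plan is to observe that this corollary follows almost immediately from Theorem~\ref{the:extended-quality}, which I may invoke as it appears earlier in the excerpt. The key insight is that $Q_n$ is invariant under the extension operation $\mathcal{F} \mapsto \bar{\mathcal{F}}$, so two sets with the same extension must yield the same PNR quality.

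Concretely, I would first apply Theorem~\ref{the:extended-quality} to $\mathcal{A}$ to obtain $Q_n(\mathcal{A}) = Q_n(\bar{\mathcal{A}})$, then apply it to $\mathcal{B}$ to obtain $Q_n(\mathcal{B}) = Q_n(\bar{\mathcal{B}})$. Using the hypothesis $\bar{\mathcal{A}} = \bar{\mathcal{B}}$, the right-hand sides are equal, so chaining the equalities gives $Q_n(\mathcal{A}) = Q_n(\bar{\mathcal{A}}) = Q_n(\bar{\mathcal{B}}) = Q_n(\mathcal{B})$. This completes the proof in three steps.

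There is essentially no obstacle here because all the real work was done in Theorem~\ref{the:extended-quality}; the corollary merely packages that invariance into a useful equivalence criterion. The only point worth double-checking is that the hypothesis in the paragraph preceding the corollary (``if it is possible to write all elements in $\mathcal{A}$ as a linear combination of elements in $\mathcal{B}$ where the coefficients are non-negative'') is equivalent to $\bar{\mathcal{A}} \subseteq \bar{\mathcal{B}}$, and that the symmetric statement gives $\bar{\mathcal{A}} = \bar{\mathcal{B}}$; but since the corollary as formally stated already takes $\bar{\mathcal{A}} = \bar{\mathcal{B}}$ as hypothesis, no such verification is needed inside the proof itself. Thus the argument will be a short two- or three-line deduction with no further lemmas required.
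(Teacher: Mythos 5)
Your proof is correct and is essentially identical to the paper's own: both apply Theorem~\ref{the:extended-quality} to $\mathcal{A}$ and $\mathcal{B}$ separately and chain the equalities $Q_n(\mathcal{A}) = Q_n(\bar{\mathcal{A}}) = Q_n(\bar{\mathcal{B}}) = Q_n(\mathcal{B})$. No gaps.
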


\begin{proof}
Using theorem \ref{the:extended-quality} yields
\begin{equation}
    Q_n(\mathcal{A}) = Q_n(\bar{\mathcal{A}}) = Q_n(\bar{\mathcal{B}}) = Q_n(\mathcal{B}).
\end{equation}
\end{proof}

\begin{theorem}
The PNR quality for the set of all probability distributions on $\mathbb{N}$ is
\begin{equation}
    Q_n = \min \bigg\{ \min_{m \leq n} P_{m, m}, \; \inf_{m > n} P_{n, m} \bigg\}.
\end{equation}
\end{theorem}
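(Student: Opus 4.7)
The key observation is that the quantity being infimized, namely
\[
F(p) \;=\; \sum_{m \leq n} P_{m,m}\, p(m) + \sum_{m > n} P_{n,m}\, p(m),
\]
is a convex combination of the coefficients $R_m := P_{m,m}$ (for $m \leq n$) and $R_m := P_{n,m}$ (for $m > n$), weighted by the distribution $p$. Since $F$ is linear in $p$ and the set of all probability distributions on $\mathbb{N}$ is the convex hull of the Fock (Dirac) distributions $\{\delta_m\}_{m \in \mathbb{N}}$, the natural plan is to reduce the infimum over all distributions to the infimum over the extreme points.

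The cleanest way to carry this out is to invoke Corollary \ref{the:equal-extensions}. Let $\mathcal{D} = \{\delta_m : m \in \mathbb{N}\}$ denote the set of Fock distributions, and let $\mathcal{F}$ denote the set of all probability distributions on $\mathbb{N}$. Every $p \in \mathcal{F}$ can be written as $p = \sum_m p(m)\, \delta_m$ with nonnegative coefficients summing to one, so $p \in \bar{\mathcal{D}}$; conversely, every element of $\bar{\mathcal{D}}$ is trivially a probability distribution on $\mathbb{N}$. Hence $\bar{\mathcal{D}} = \bar{\mathcal{F}} = \mathcal{F}$, and Corollary \ref{the:equal-extensions} gives $Q_n(\mathcal{F}) = Q_n(\mathcal{D})$.

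It then remains to evaluate $Q_n(\mathcal{D})$. For each Fock distribution $\delta_{m_0}$, the quantity $F(\delta_{m_0})$ equals $P_{m_0,m_0}$ when $m_0 \leq n$, and equals $P_{n,m_0}$ when $m_0 > n$. Taking the infimum over $m_0 \in \mathbb{N}$ therefore splits into two parts: a minimum over the finite set $\{0, 1, \dots, n\}$ (so an honest $\min$ is attained), and an infimum over $\{m_0 > n\}$ (which need not be attained since it is over infinitely many indices). Combining these gives
\[
Q_n \;=\; \min\Bigl\{\, \min_{m \leq n} P_{m,m},\; \inf_{m > n} P_{n,m} \,\Bigr\},
\]
as desired.

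The only mild subtlety I anticipate is the bookkeeping with the two cases $m \leq n$ and $m > n$, and making sure to write $\inf$ (not $\min$) for the tail part, since $P_{n,m}$ for $m > n$ ranges over an infinite index set with no guaranteed attainment. Apart from that, the argument is essentially a one-line application of Corollary \ref{the:equal-extensions} together with the fact that any probability distribution on $\mathbb{N}$ is a convex combination of Fock states.
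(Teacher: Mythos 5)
Your proof is correct and follows essentially the same route as the paper's: both reduce the infimum over all distributions to the set of Kronecker-delta (Fock) distributions via Corollary \ref{the:equal-extensions}, noting that the extension of the delta set is the full set of distributions, and then evaluate the objective on each $\delta_{m_0}$ directly. Your remark about using $\inf$ rather than $\min$ on the infinite tail index set is a sensible point the paper leaves implicit.
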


\begin{proof}
Introduce the set $\mathcal{A}$ as the set of all probability distributions on $\mathbb{N}$ and let
\begin{equation}
    \mathcal{F} = \bigg\{ f_k: m \in \mathbb{N} \mapsto \delta_{k, m} \mid k \in \mathbb{N} \bigg\},
\end{equation}
where $\delta_{k, m}$ is the Kronecker delta. Any $p \in \mathcal{A}$ can be written as a linear combination of elements in $\mathcal{F}$ so it holds that $\bar{\mathcal{F}} = \mathcal{A}$. Using corollary \ref{the:equal-extensions} yields
\begin{equation}
    Q_n = Q_n(\mathcal{A}) = Q_n(\mathcal{F}).
\end{equation}
Noticing that the infimum on the set of Kronecker deltas can be written
\begin{equation}
    Q_n(\mathcal{F}) = \min \bigg\{ \min_{m \leq n} P_{m, m}, \; \inf_{m > n} P_{n, m} \bigg\},
\end{equation}
completes the proof.
\end{proof}

\end{document}